\newtheorem{lemma}{Lemma}
\newtheorem{theorem}{Theorem}
\begin{document}

\thispagestyle{empty}

\title[On weakly commutative triples]{On weakly commutative triples of partial differential operators}
\author{{S.P. Tsarev, V.A. Stepanenko}}%
\address {Sergey P. Tsarev, Vitaly A. Stepanenko
\newline \hphantom {iii} Siberian Federal University,
\newline \hphantom {iii} pr. Svobodny, 79,
\newline \hphantom {iii} 660041, Krasnoyarsk, Russia}%
\email{sptsarev@mail.ru}%

\thanks{\copyright \ 2017, Tsarev S.P., Stepanenko V.A.}
\thanks{\rm The work is supported by the RNF grant (project 14-11-00441)}


\maketitle 
{\small
\begin{quote}
\noindent{\sc Abstract. } We investigate algebraic properties of weakly commutative triples, appearing in
the theory of integrable nonlinear partial differential equations. 
Algebraic technique of skew fields of formal pseudodifferential 
operators as well as skew Ore fields of fractions are applied to this problem, relating weakly commutative 
triples to commuting elements of skew Ore field of formal fractions of ordinary differential operators.
A version of Burchnall-Chaundy theorem for weakly commutative triples is proved by algebraic means avoiding 
analytical complications typical
for its proofs known in the theory of integrable equations.
\medskip

\noindent {\bf Keywords:} integrable systems, skew fields, formal pseudodifferential operators, Ore extensions.
\end{quote}
}

\section{Introduction}

In the theory of integrable nonlinear equations with three independent variables, an algebro-geometric
theory similar to the theory of commuting ordinary differential operators was often used 
(cf., for example, \cite{DKN, K97}). Namely one considers 
\cite{K97} two
operator $ L_1 $, $ L_2 $ and the Schr\"odinger operator $ H = - \partial_x^2 - \partial_y^2 + u (x, y) $ 
in the ring of differential operators $ F [\partial_x, \partial_y] $ with variable coefficients, such that
\begin{equation}\label{Kri1}
[L_1, L_2] = P_0 \cdot H, \quad [L_1, H] = P_1 \cdot H, \quad
[L_2, H] = P_2 \cdot H
\end{equation}
for some $ P_0, P_1, P_2 \in F[\partial_x, \partial_y] $. 
Such triples of operators will be called below 
triples of weakly commutative operators (commuting $ \mod H $) or, more specifically, triples of 
$H$-commuting operators. For simplicity, we will work with the hyperbolic version of the identities 
(\ref{Kri1}), in which we set $ H = - \partial_x \partial_y + u $. 
Such triples have many remarkable properties, of which we give below the following analog of the Burchnall-Chaundy theorem 
\cite{BC, K97}:
\begin{theorem}\label{thBurCha3}
 The operators $ L_1 $ and $ L_2 $ satisfying the equations (\ref{Kri1}) satisfy a polynomial
 relation $ Q (L_1, L_2) = 0 $ with constant coefficients on the solution space $ H = 0 $, in other words,
 $ Q ( L_1, L_2)  = S \cdot H $, $ S \in 
 F [\partial_x, \partial_y] $.
\end{theorem}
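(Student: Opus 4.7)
The plan is to convert the weak commutativity of $L_1, L_2$ modulo $H$ into genuine commutativity of images inside a skew Ore field of one-variable formal pseudodifferential operators, where the algebraic Burchnall--Chaundy theorem developed in the paper applies.

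The first step is a module-theoretic reformulation. From $[L_i, H] = P_i H$ one obtains $H L_i = (L_i - P_i) H$, so the left ideal $I := R \cdot H \subset R = F[\partial_x, \partial_y]$ is stable under right multiplication by $L_1$ and $L_2$. Right multiplication therefore induces endomorphisms $\overline L_1, \overline L_2$ of the left $R$-module $M := R/I$, and the remaining identity $[L_1, L_2] = P_0 H \in I$ makes them commute on $M$. Finding $Q$ with $Q(L_1, L_2) \in R \cdot H$ is then equivalent to finding a scalar-coefficient polynomial relation satisfied by the commuting endomorphisms $\overline L_1, \overline L_2$.

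The second step exploits the hyperbolic form $H = -\partial_x \partial_y + u$ to cut $M$ down to a one-variable object: modulo $I$ the relation $\partial_x \partial_y \equiv u$ lets one inductively reduce every monomial $\partial_x^a \partial_y^b$ with $a, b \geq 1$, giving $M$ the $F$-basis $\{\partial_x^a\}_{a \geq 0} \cup \{\partial_y^b\}_{b \geq 1}$. This reduction is then upgraded, in the spirit of the paper's skew Ore formalism, to an embedding of $M$ together with $\overline L_1, \overline L_2$ into a skew Ore field $\mathcal K$ of formal $\Psi$DOs in $\partial_x$ over a differential base carrying both $\partial_x$ and $\partial_y$ as commuting derivations. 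Under this embedding, the images $\widetilde L_1, \widetilde L_2$ are genuinely commuting elements of $\mathcal K$, and the Burchnall--Chaundy theorem for commuting elements of a skew Ore field of $\Psi$DOs (the algebraic version the paper advertises) produces a nontrivial polynomial relation $Q(\widetilde L_1, \widetilde L_2) = 0$ with coefficients in the centre. Pulling this relation back yields $Q(L_1, L_2) \in R \cdot H$, i.e.\ $Q(L_1, L_2) = S \cdot H$ for some $S \in R$.

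The main obstacle I anticipate is the construction of the embedding in the second step. The naive substitution $\partial_y \mapsto \partial_x^{-1} u$ is \emph{not} a ring homomorphism from $R$ to $F((\partial_x^{-1}))$, since it conflicts with the Leibniz rule $[\partial_y, f] = f_y$ whenever $f$ depends on $y$; a direct computation gives $[\partial_x^{-1} u, f] = -u f_x \partial_x^{-2} + \cdots$ rather than $f_y$. The reduction must therefore be set up on the module level, or on the specific subring preserving $I$ that is generated by $L_1, L_2, H$ and the coefficients, rather than as a global ring homomorphism. A secondary technical point is to verify that the coefficients of $Q$ produced by the skew-field Burchnall--Chaundy argument descend to true scalar constants of the original problem and are not merely central elements of the larger target skew Ore field.
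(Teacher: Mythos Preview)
Your plan contains a genuine gap at the decisive step. In the second step you intend to invoke ``the Burchnall--Chaundy theorem for commuting elements of a skew Ore field of $\Psi$DOs (the algebraic version the paper advertises)'' to obtain a polynomial relation $Q(\widetilde L_1,\widetilde L_2)=0$ in $\mathcal{F}_2(X)$. But no such theorem is available: the paper explicitly says in its Conclusion that an analogue of the Burchnall--Chaundy polynomial relation for general commuting elements of a one-dimensional Ore skew field $F(X)$ over an arbitrary differential field $F$ of characteristic~$0$ appears to be unknown. The only general fact about commuting $R_1,R_2\in\mathcal{F}_2((X))$ that the paper supplies is Schur's formal series expansion $R_1=\sum_k c_k R_2^{(n-k)/m}$, which does \emph{not} yield a polynomial relation. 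Theorem~\ref{th2} runs in the opposite direction: a polynomial relation mod~$H$ for $L_1,L_2$ implies one for $R_1,R_2$, not conversely. So the black box you are relying on does not exist, and the argument as written is circular with respect to what the paper is trying to establish.

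The paper's actual proof (Theorem~\ref{thTs-BCh}) bypasses this missing skew-field result by a dimension count carried out directly on the $H$-reduced side. One writes $Q(L_1,L_2)=\sum_{r+s\le N} b_{r,s}L_1^rL_2^s$ with unknown constants $b_{r,s}$ and $H$-reduces it to the form $W_{(1)}(X)+W_{(2)}(Y)+w_{00}$. The key point---and this is where weak commutativity is used in an essential, not merely organizational, way---is that $W$ again $H$-commutes with $L_1,L_2,H$, so by the argument of Theorem~\ref{th-step1} its leading coefficients are forced to be constant multiples of fixed functions $\phi(x)^{k}$, $\psi(y)^{l}$. Hence killing each leading coefficient costs exactly one scalar linear equation in the $b_{r,s}$, not a whole function's worth. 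One then iterates: the number of unknowns grows like $N^2/2$ while the number of linear constraints grows like $2N\max(n_i,m_i)$, so for large $N$ a nontrivial solution exists. Your module/embedding reformulation is fine as scaffolding, but to close the argument you need this finiteness mechanism (or an independent proof of Burchnall--Chaundy for $\mathcal{F}_2(X)$), and your proposal supplies neither.
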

In addition to the rich analytical and algebro-geometric structures,
this theory has an important purely algebraic aspect, to which this work is devoted. 
Namely, we study formal algebraic properties 
of triples (\ref{Kri1}) and prove the algebraic analogues of the analytic results of \cite{DKN, K97} 
without resorting to many subtle analytic details
(implicitly contained, for example, in the theorem above). 
In particular, we prove Theorem~\ref{thBurCha3} by purely algebraic means. In addition, we give other simple algebraic properties of weakly commutative triples. 
Generalizing the technique of Schur~\cite{Schur}, we show 
the commutativity modulo $ H $ of all operators $ L_2 $ weakly
commuting with given $ H $ and $ L_1 $ and the existence of 
a formal analytic (generally speaking non-polynomial) relation (a series with constant coefficients) 
between commuting formal pseudodifferential ordinary operators from the 
respective Ore skew field, therefore partially solving one of the problems posed in~\cite{Goodearl}. 

In the next Section we give the necessary information from the algebraic theory 
of formal pseudodifferential operators and an overview of the constructions and results we need. 

In Section~\ref{sec-primery} we describe the simplest properties 
of weakly commutative triples and give useful examples of families of operators weakly commuting with $ H $. 

Section~\ref{sec-our-th} is devoted to the proof of the main results of this paper. 

In the Conclusion we discuss the unsolved algebraic problems in the theory of commuting pseudodifferential operators. 


\section{The skew field of formal pseudodifferential operators and the Ore skew field of formal fractions of differential operators} 

In what follows we often use the operation of formal inversion of ordinary and partial differential operators. 
The skew field of pseudodifferential operators of one variable, traditional for the theory of integrable nonlinear differential equations, is defined in our context as a skew field of formal series of the form 
\begin{equation}\label{pdo1}
  L=\sum_{i=0}^\infty
  p_{n-i}X^{n-i}=p_nX^n+p_{n-1}X^{n-1}+p_{n-2}X^{n-2}+\ldots  
\end{equation}
where $ p_k $ are smooth or analytic functions of two variables $ x $, $ y $. Here and below, we use the notation 
$ X = \frac{\partial}{\partial x} $, 
$ Y = \frac{\partial}{\partial y} $.
However, it is not always reasonable to work with
pseudodifferential operators in the form of infinite series. 
Often a very simple and constructive definition of the ``Ore skew field of fractions'' is more convenient. 
This construction is well known in non-commutative algebra, it is
analogous to the construction of the field of fractions for an integral domain in commutative 
algebra (see \cite{Ore, Cohn, Goodearl}). Briefly, the construction of the Ore skew field is done
as follows: let us consider formal elements of the form $ L^{-1} \cdot M $ or $ B \cdot A^{-1} $ (each of them can be rewritten in the other form) where $ A $, $ B $, $ L $, $ M $ are differential operators (ordinary or partial); 
in order to determine the operations of addition of such formal fractions, it is necessary to reduce them to a common denominator, 
finding a common multiple (not necessarily the least common multiple) of the denominators. 
A brief description of the Ore construction and the necessary conditions for its correctness can be found in \cite{TT10}. 
We denote by $ F (\partial_x) \equiv F (X) $ the result of applying of this construction to the ring $ F [\partial_x] \equiv F [X] $ of ordinary differential operators with coefficients in 
some function space $ F $, by analogy with by construction of the field of rational functions $ {\mathbb Q} (x) $ from the polynomial ring $ {\mathbb Q} [x] $. 
In the following sections, we always use the field of (locally) analytic functions of two variables $ \mathcal{F}_2 = \{f (x, y) \} $ as the field $ F $ of coefficients. 
Similarly, the result of applying the Ore construction to the
ring of partial differential operators $ \mathcal{F}_2 [\partial_y, \partial_x] $ with derivatives with respect to $ x, y $ will be denoted by $ \mathcal{F}_2 (\partial_y, \partial_x)  \equiv \mathcal{F}_2 (X, Y) $. 
Below we often use the ring $ \mathcal{F}_2 (\partial_x) [\partial_y] \equiv \mathcal{F}_2 (X) [Y] $ of formal ordinary differential operators with respect to $ \partial_y $ 
with the Ore skew field $ \mathcal{F}_2 (\partial_x) $ of coefficients. 
All the algebraic properties of the ring of ordinary differential operators are preserved (in fact, as already noted by O.Ore, 
it is possible to construct the rings of differential operators and the skew fields of their fractions over non-commutative differential fields). 
The skew field of formal series (\ref{pdo1}) will be denoted by $ \mathcal{F}_2 ((\partial_x)) \equiv \mathcal{F}_2 ((X)) $. 
For reference, we give the explicit expression for the multiplication in this skew field 
(the composition of formal operators): for $ L_1 $ of the form (\ref{pdo1}) and 
$$ L_2 = \sum_{i = 0}^\infty q_{m-i} X^{m-i } = q_mX^m + q_{m-1} X^{m-1} + q_{m-2} X^{m-2} + \ldots, 
$$ 
we have 
\begin{equation}
  L_1\cdot
  L_2=p_nq_mX^{n+m}+\sum^\infty_{k=1} \left ( \sum_{j=0}^k
  p_{n-j}\left ( \sum^{k-j}_{i=0} {n-j\choose k-j-i}
  q_{m-i}^{(k-i-j)}\right) \right)X^{m+n-k}.
\end{equation}
In this formula the superscript $(k-i-j)$ above the coefficient $ q_{m-i } $ denotes its derivative of order $ k-i-j $ with respect to the variable $ x $. 

For all of the skew fields and rings of formal operators considered here, there is a natural extension of the order of an element 
that extends the order of a differential operator.

In fact many results and ideas of the theory of formal pseudodifferential operators of the form (\ref{pdo1}) were given
by I. Schur~\cite{Schur}. 
In particular, he proved the following important theorems. 
\begin{theorem}\label{thSch-2} 
For any element $ L \in \mathcal{F}_1 ((\partial_x)) $ (with coefficients from the field of smooth functions of one variable)
of arbitrary non-zero order $ n $, there exists a root 
$ R = \sqrt [n] {P} $ of degree $ n $, thus having order 1. 
\end{theorem}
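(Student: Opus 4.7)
The plan is to construct $R$ by the classical method of undetermined coefficients, using the explicit composition formula displayed just before the theorem. We seek $R$ in the form
\[
 R = r_1 X + r_0 + r_{-1} X^{-1} + r_{-2} X^{-2} + \ldots \,\in\, \mathcal{F}_1((\partial_x))
\]
with unknown coefficient functions $r_{1-j}(x)$, and we equate the expansion of $R^n$ with the expansion of $L = p_n X^n + p_{n-1}X^{n-1}+\ldots$ power by power in $X$. Extracting the coefficient of $X^n$ involves only the leading terms of every factor, which commute trivially; this yields the scalar equation $r_1^n = p_n$. Since $p_n \neq 0$ (by the assumption that $\mathrm{ord}\,L = n$) and $\mathcal{F}_1$ is the field of (locally) smooth or analytic functions of one variable, the $n$-th root $r_1 = p_n^{1/n}$ exists locally, so the first coefficient is determined (up to a choice of $n$-th root of unity).

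The crux is to verify that, for each $k \geq 1$, collecting the coefficient of $X^{n-k}$ on both sides of $R^n = L$ gives an equation of the triangular form
\[
 n\, r_1^{n-1}\, r_{1-k} \;+\; \Phi_k\bigl(r_1, r_0, \ldots, r_{2-k}; \text{their }x\text{-derivatives}\bigr) \;=\; p_{n-k},
\]
where $\Phi_k$ is a universal polynomial over $\mathbb{Z}$ in the already-determined lower-indexed coefficients and their derivatives of all orders. Granting this, the recursion closes: since $r_1 \neq 0$, we divide by $n\, r_1^{n-1}$ and solve for $r_{1-k}$, proceeding inductively through all $k \geq 1$ and thereby constructing all coefficients of $R$.

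To establish the triangular form we expand $R^n$ as a sum over $n$-tuples of indices $(i_1, \ldots, i_n)$ (one for each factor $r_{1-i_\ell} X^{1-i_\ell}$) together with the derivative indices produced whenever an $X^a$ is moved past an $r$ via the Leibniz expansion $X^a \cdot r = \sum_{j\ge 0}\binom{a}{j} r^{(j)} X^{a-j}$. A contribution lands in the $X^{n-k}$ coefficient precisely when the total $\sum_\ell i_\ell$ together with the total derivative order equals $k$. Any summand that actually \emph{contains} the still-unknown function $r_{1-k}$ must satisfy $i_\ell = k$ for some $\ell$, hence (since $\sum i_\ell \leq k$) all other $i_m = 0$ and, moreover, no derivative may be taken on that factor; the $n$ possible positions of $r_{1-k} X^{1-k}$ inside the product give $n\, r_1^{n-1}\, r_{1-k}$, while all remaining summands involve only $r_1, \ldots, r_{2-k}$ and their derivatives and are absorbed into $\Phi_k$. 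The main obstacle is purely this combinatorial bookkeeping --- once the separation of ``unknown leading term'' from ``known remainder'' is made rigorous, the rest of the argument reduces to the inductive solution of an infinite cascade of explicit linear equations.
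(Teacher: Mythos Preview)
Your argument is correct and is precisely the classical Schur construction the paper has in mind: the paper does not write out a detailed proof of this theorem but merely records that once one fixes $r_1=\sqrt[n]{p_n}$, ``all the other coefficients of $R$ are uniquely recovered,'' which is exactly the triangular recursion you carry out. Your isolation of the term $n\,r_1^{\,n-1}r_{1-k}$ and the bookkeeping showing that every other contribution to the $X^{n-k}$ coefficient involves only $r_1,\dots,r_{2-k}$ and their $x$-derivatives is the heart of Schur's original proof, so nothing further is needed.
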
 
Actually I.Schur proves the existence of an element
$ R \in \mathcal{F}_1 ((\partial_x)) $ such that 
$ R^n = P $; for each choice of one of the $ n $ roots, $ r_1 = \sqrt [n] {p_n} $ ($ p_n $ is the leading coefficient of $ L $, see (\ref{pdo1})) as the leading coefficient of the  
root $ R $, all the other coefficients of $ R $
are uniquely recovered. 
\begin{theorem}\label{thSch-3} 
If two elements $ L_1, L_2 \in \mathcal{F}_1 ((\partial_x)) $ (of orders $ n $ and $ m \neq 0 $ respectively) commute, then there exist constants $ c_0 $, $ c_1 $, $ c_2, \ldots, $ such that%
\begin{equation}\label{eq-ryad-ci-Schur}
  L_1=\sum^\infty_{k=0}c_kL_2^{\frac{n-k}{m}}.
\end{equation} 
\end{theorem}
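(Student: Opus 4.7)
The plan is to reduce commutativity of $L_1$ with $L_2$ to commutativity with an order-one $m$-th root of $L_2$, and then to exhaust $L_1$ from the top by successively subtracting constant multiples of powers of this root. First, by Theorem~\ref{thSch-2} applied to $L_2$, I would fix an element $R\in\mathcal{F}_1((\partial_x))$ with $R^m=L_2$; it has order $1$ and some leading coefficient $r$. The central step is the upgrade $[L_1,L_2]=0\Rightarrow[L_1,R]=0$. For this I would exploit that $\mathcal{F}_1((\partial_x))$ is a skew field, so (assuming $L_1 \neq 0$) $L_1$ is invertible and $R':=L_1 R L_1^{-1}$ is defined. Commutativity gives $(R')^m = L_1 R^m L_1^{-1}= L_2$, so $R'$ is another $m$-th root. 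A short computation with the multiplication formula shows that conjugation by an invertible operator preserves both the order and the leading coefficient of an element; hence $R'$ has order $1$ with the same leading coefficient $r$ as $R$, and the uniqueness clause of Theorem~\ref{thSch-2} forces $R'=R$, i.e.\ $[L_1,R]=0$.

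Once commutativity with the order-one $R$ is established, I would inductively peel off the leading term of $L_1$. The key local computation is that, for any $A = aX^\ell+\ldots$, the coefficient of $X^\ell$ in $[R,A]$ (the only surviving piece after all the function-valued cancellations produced by the multiplication formula) equals $ra' - \ell a r'$. If $[R,A]=0$, this forces $a = c\,r^\ell$ for some constant $c$. Applying this to $L_1$ yields a unique constant $c_0$ with $L_1 - c_0 R^n$ of order at most $n-1$, and this difference still commutes with $R$. Iterating produces the sequence $c_0, c_1, c_2,\ldots$ and the series identity~(\ref{eq-ryad-ci-Schur}); convergence is automatic in $\mathcal{F}_1((\partial_x))$, since the $k$-th term has order $n-k$, so each coefficient of $X^{n-j}$ in the partial sums stabilizes after finitely many steps.

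The one non-routine step is the upgrade $[L_1,L_2]=0\Rightarrow [L_1,R]=0$, where all of the skew-field structure is used (invertibility of $L_1$, uniqueness in Theorem~\ref{thSch-2}, and the explicit form of the multiplication). The rest of the argument is a straightforward successive-approximation scheme, with the constants $c_k$ read off one by one from the leading-coefficient equation $a = c\,r^\ell$.
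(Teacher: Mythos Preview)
Your proof is correct and follows the same overall Schur scheme as the paper: pass to an $m$-th root $R$ of $L_2$, upgrade $[L_1,L_2]=0$ to $[L_1,R]=0$, then peel off leading terms one by one using the relation $ra'-\ell ar'=0\Rightarrow a=c\,r^\ell$. The successive-approximation part is identical to the paper's argument (Theorem~\ref{th-step1}). The one genuine difference is in the upgrade step. You argue via conjugation: $R'=L_1RL_1^{-1}$ is another $m$-th root of $L_2$ with the same order and leading coefficient, so the uniqueness clause of Theorem~\ref{thSch-2} forces $R'=R$. The paper instead uses the Leibniz-type expansion (Lemma~\ref{lem-step11})
\[
0=[R^m,L_1]=\sum_{i=1}^m R^{m-i}[R,L_1]R^{i-1},
\]
and observes that if $[R,L_1]$ had a nonzero leading term $\phi_s X^s$, the sum would have nonzero leading term $m r^{m-1}\phi_s X^{m-1+s}$, a contradiction. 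Your route is cleaner and exploits the skew-field structure (invertibility of $L_1$) together with the uniqueness of roots; the paper's route is more elementary, needing neither inverses nor the uniqueness statement, and transfers verbatim to settings where one only has a ring (which is why the paper phrases it for $[R,M]$ with $M\in\mathcal{F}_2((X))[Y]$). Either way the argument goes through.
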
 
The following main statement of the paper~\cite{Schur} easily follows from the previous two theorems: 
\begin{theorem}\label{thSch-1} 
  If two elements $ L_1, L_2 \in \mathcal{F}_1 ((\partial_x)) $
  commute with the third $ L_3 \in \mathcal{F}_1 ((\partial_x)) $ that is not equal to a constant, then $ L_1 $ and $ L_2 $ commute with each other.
\end{theorem}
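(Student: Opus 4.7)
The plan is to use the two Schur theorems above as black boxes and reduce commutativity of $L_1$ and $L_2$ to the trivial commutativity of powers of a single element. Write $m$ for the order of $L_3$; since $L_3$ is not a constant, $m\neq 0$, so Theorem~\ref{thSch-2} produces a formal root $R\in\mathcal{F}_1((\partial_x))$ of order $1$ with $R^m=L_3$. All integer powers of $R$ trivially commute with one another, and in particular each of them commutes with $L_3=R^m$.

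Next I would apply Theorem~\ref{thSch-3} twice, using $L_3$ as the second argument each time. Since $L_1$ commutes with $L_3$ and $\mathrm{ord}(L_3)=m\neq 0$, there are constants $c_k$ such that
\begin{equation*}
  L_1=\sum_{k=0}^{\infty} c_k L_3^{(n_1-k)/m}=\sum_{k=0}^{\infty} c_k R^{n_1-k},
\end{equation*}
where $n_1=\mathrm{ord}(L_1)$ and the fractional powers of $L_3$ are interpreted, in the only consistent way, as integer powers of $R$. Similarly, from the commutation of $L_2$ with $L_3$,
\begin{equation*}
  L_2=\sum_{k=0}^{\infty} d_k R^{n_2-k},
\end{equation*}
for some constants $d_k$ and $n_2=\mathrm{ord}(L_2)$. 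Both $L_1$ and $L_2$ are thus expressed as formal series in a \emph{single} element $R$ with \emph{constant} coefficients.

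Finally, I would verify that any two such series commute. Multiplying the expressions for $L_1\cdot L_2$ and $L_2\cdot L_1$ term-by-term, each homogeneous component of fixed order in $R$ is a finite sum of the form $\sum c_k d_\ell R^{n_1+n_2-k-\ell}$, identical in either order because $R^a\cdot R^b=R^{a+b}=R^b\cdot R^a$ and the coefficients $c_k,d_\ell$ are central. Hence $L_1L_2=L_2L_1$.

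The substantive step is the invocation of Theorem~\ref{thSch-3}, which is where all of the analytic-algebraic work is already done; the rest is bookkeeping. The only mild subtlety is ensuring that the formal series for $L_1$ and $L_2$ live in $\mathcal{F}_1((\partial_x))$ and that their product is well-defined order-by-order — both guaranteed by the fact that the filtration by order on the skew field of formal pseudodifferential operators is complete, so term-by-term multiplication of two series converges in the filtration sense and yields the unique product in $\mathcal{F}_1((\partial_x))$.
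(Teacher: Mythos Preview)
Your argument is exactly the deduction the paper intends: it states only that Theorem~\ref{thSch-1} ``easily follows from the previous two theorems'', and expressing both $L_1$ and $L_2$ as constant-coefficient series in the root $R=\sqrt[m]{L_3}$ via Theorems~\ref{thSch-2} and~\ref{thSch-3} is precisely that deduction. One small slip: ``$L_3$ is not a constant'' does not literally force $m=\mathrm{ord}(L_3)\neq 0$ in $\mathcal{F}_1((\partial_x))$ (think of $L_3=1+X^{-1}$ or $L_3=f(x)$); the first case is handled by replacing $L_3$ with $L_3-c$ for the constant leading term $c$, and the second by noting that the centralizer of an order-$0$ operator with non-constant leading coefficient is forced term-by-term to lie in $\mathcal{F}_1$, hence is commutative.
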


We actually reproduce the proofs of these three theorems in 
Section~\ref{sec-our-th} in the context of the theory of weakly commutative triples (\ref{Kri1}). 

Note that the expansion (\ref{eq-ryad-ci-Schur}) 
gives the Puiseux series at infinity for the algebraic function 
defined by the polynomial relation $ Q (L_1, L_2) = 0 $
in the case of differential operators $ L_1, L_2 \in F [X] $, 
as follows from the Burchnall-Chaundy theorem. 
The Schur's method of the proof of Theorem~\ref{thSch-3} 
does not allow us to deduce this fact because of the obvious 
fundamental difficulty: the relations (\ref{eq-ryad-ci-Schur}) 
in the skew field $ \mathcal{F}_2 ((X)) $  are possible 
with arbitrary sets of constants $ c_k $, since 
$ \mathcal{F}_2 ((X)) $ includes not only the ``rational'' elements that form the Ore skew field $ \mathcal{F}_2 (X) $. 
Precisely for this reason the study of the smaller Ore skew field 
is important for our purposes. 

Another important result that leads naturally to the study of the skew field $ \mathcal{F}_2 (X) $ in the context of the theory of weakly commutative triples (\ref{Kri1}), will be given below in Section~\ref{sec-our-th}. 

\cite{Goodearl} gives a good overview of many useful algebraic constructions and results of the theory of Ore  skew fields and  skew fields of formal series of the form (\ref{pdo1}). 


\section{Simplest properties and examples of weakly commutative operators}\label{sec-primery} 

We consider algebraic properties of the set of operators $ L $ weakly commutative with a fixed $ H $. 
First, as we will see from the examples below (Examples~2 and 3), two such operators $ L_1 $, $ L_2 $ do not necessarily satisfy (\ref{Kri1}), which distinguishes our situation from the situation described in the 
Theorem~\ref{thSch-1}. However, their sums and products, as is easy to verify, still weakly commute with the given $ H $, thereby forming a subring of $ \mathcal{F}_2 [X, Y] $. 
If we now consider a more narrow set of operators, namely, the operators 
$ L_2 \in \mathcal{F}_2 [X, Y] $ for which (\ref{Kri1}) holds with the given 
$ L_1 $, $ H $, then again it is easy to see that all such $ L_2 $ are mutually $ H $-commutative.
This also follows from the results of Section~\ref{sec-our-th} and naturally leads to the 
consideration of the subfields of the skew field $ \mathcal{F}_2 (X) $ consisting of those elements of 
$ \mathcal{F}_2 (X ) $, that commute with its fixed element (the centralizer of the element 
in the algebraic terminology \cite{Goodearl}). 

\medskip 

We now consider several simplest transformations that simplify the form of weakly commutative triples of operators and the corresponding pairs of commuting elements of $ \mathcal{F}_2 (X) $ (see below Theorem~\ref{th2} in  Section~\ref{sec-our-th}). 
\begin{lemma}\label{lem-step0} 
  Suppose that the differential operators $ L $ weakly commute with $ H $: 
\begin{equation}\label{lem-st0-komm} 
[L, H] = D \cdot H, 
\end{equation}
$ D \in \mathcal{F}_2 [X, Y] $. 
If we replace $H$ with $ \widehat {H} = fH $, where $ f (x, y) \in \mathcal{F}_2 $,
then there exists an operator $ \widehat {D} \in \mathcal{F}_2 [X, Y] $ such that 
$$ [L, \widehat {H}] = \widehat {D} \cdot \widehat {H}, 
$$ 
so $ L $ and $ \widehat {H} $ commute modulo 
$ \widehat {H} $. 
\end{lemma}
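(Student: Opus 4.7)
The plan is to use the Leibniz rule for commutators, namely $[L, AB] = [L, A]B + A[L, B]$, applied to $[L, fH]$. This gives
\begin{equation*}
[L, \widehat{H}] = [L, fH] = [L, f] H + f [L, H] = [L, f] H + f \cdot D H = \bigl([L,f] + fD\bigr) H.
\end{equation*}
The key point is that $[L, f]$, being the commutator of a differential operator with a zeroth-order operator (multiplication by the function $f$), is itself a differential operator of order $\operatorname{ord}(L) - 1$.

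Next I would solve for $\widehat{D}$ formally. Since we want $\widehat{D} \cdot \widehat{H} = \widehat{D} \cdot fH = \bigl([L,f] + fD\bigr)H$, and $\mathcal{F}_2[X,Y]$ is a domain, cancelling $H$ on the right reduces everything to finding $\widehat{D} \in \mathcal{F}_2[X, Y]$ with
\begin{equation*}
\widehat{D} \cdot f = [L, f] + fD.
\end{equation*}
The natural candidate is $\widehat{D} = \bigl([L, f] + fD\bigr) \cdot f^{-1}$, which a priori lives only in the skew field of pseudodifferential operators.

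The main step is to verify that $\widehat{D}$ is in fact an ordinary differential operator. I would split it into two pieces: $[L,f] f^{-1}$ is the composition of the differential operator $[L,f]$ with the multiplication operator $m_{f^{-1}}$, hence a differential operator; and $f D f^{-1}$ is the conjugate of $D$ by the invertible zeroth-order operator $m_f$, which is a standard ring automorphism of $\mathcal{F}_2[X,Y]$ preserving the order of $D$. Both pieces therefore lie in $\mathcal{F}_2[X,Y]$, and so does their sum.

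The only obstacle is the usual one: $f^{-1}$ must exist in $\mathcal{F}_2$, so the argument is carried out locally on the open set where $f$ does not vanish (consistent with the paper's convention that $\mathcal{F}_2$ is the sheaf of locally analytic functions). Once $\widehat{D}$ is shown to be a genuine differential operator, a direct substitution confirms $\widehat{D} \cdot \widehat{H} = \widehat{D} f H = ([L, f] + fD)H = [L, \widehat{H}]$, completing the proof.
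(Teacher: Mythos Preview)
Your proof is correct and essentially the same as the paper's: the paper expands $[L,fH]$ directly using $HL=(L-D)H$ rather than the Leibniz rule, but arrives at the identical operator $\widehat D = L - f(L-D)f^{-1} = ([L,f]+fD)f^{-1}$. Your explicit check that $\widehat D$ lies in $\mathcal{F}_2[X,Y]$ (via conjugation by the zeroth-order multiplication operator) is a point the paper leaves implicit.
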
 
\begin{proof} 
  From (\ref{lem-st0-komm}) we conclude that 
  $ H \cdot L = (L-D) \cdot H $, so 
  $$ [L, \widehat {H}] = L \cdot \widehat {H} - \widehat {H} \cdot L = L \cdot \widehat {H} -fH \cdot L = L \cdot \widehat {H} -f (L-D) \cdot H. 
  $$ 
  Substituting $ H = f^{-1} \widehat {H} $ into this equation, we get the required 
  $$[L,\widehat{H}]=L\cdot \widehat{H}-f(L-D)f^{-1}\cdot \widehat{H} =(L-f(L-D)f^{-1})\cdot \widehat{H}.
  $$ 
  \end{proof}


It is natural to carry out the following simplification of the form of the operators $ L_i $ weakly commutative with $ H $. 
First, we note that since $ [P \cdot H, H] = [P, H] \cdot H $, 
$ P \cdot H $ weakly commutes with $ H $ for any operator $ P $, in particular, this is true for operators of the form $ a (x, y) X^{m} Y^{n} \cdot H $. 

We represent the general operator $L=\sum^{i+j \leq N}_{i,j=0}a_{ij}X^iY^j$ in the form 
$$L=L_{(1)}(X) + L_{(2)}(Y) + L_{(1,2)}(X,Y) + a_{00}(x,y),
$$
where 
$$L_{(1)}=\sum^n_{i=1}a_{i0}X^i, 
\quad L_{(2)}=\sum^m_{j=1}a_{0j}Y^j,
\quad L_{(1,2)}=\sum^{i+j \leq N}_{i,j=1}a_{ij}X^iY^j,
$$  
$ a_{ij} \in \mathcal{F}_2$, $m\leq N$, $n\leq N$. 

Ordering the mixed $ X, Y $-monomials in $ L_{(1,2)} $ lexicographically, take the greatest  
$ a_{r, s} X^{r} Y^{s} $. 
Then $ \widetilde {L} = L-a_{r, s} X^{r-1} Y^{s-1} \cdot H $ also weakly commutes with $ H $, 
but $ \widetilde {L} $ has a smaller leading monomial (after this operation 
$ L_{(1)} $, $ L_{(2)} $ may possibly change).
Thus we will consecutively remove mixed monomials in $ L $. In a finite number of steps, the operator $ L $ will take the following simple form 
\begin{equation}\label{upros_L1_L2}
  {\widehat{L}}=\sum^n_{k=1}p_k(x,y)X^k+\sum^m_{k=1}q_k(x,y)Y^k+p_{00}(x,y),
\end{equation} 
while $ \widehat {L} $ still weakly commutes with $ H $, and if the original $ L $ weakly commutes with some $ L_2 $, then it is easy to verify that $ \widehat {L} $ preserves this property. 

In what follows, we call {\it the order of the operator} $ L $, simplified 
as (\ref{upros_L1_L2}), the pair $ (n, m) $: $ {\rm ord}\ L = (n, m) $, and the simplified form 
(\ref{upros_L1_L2}) will be called 
{\it $H$-reduced form of $L$}.

Of course, reduction to the form (\ref{upros_L1_L2}) is also useful for checking conditions 
$[L_1, H] = 0 \ ({\rm mod}\ H)$. 

\medskip 

{\it Simplification of operators by a change of the independent variables $ x $, $ y $.} \\
Let us trace how the leading coefficients of the operators change under the substitutions $\hat{x}=\varphi(x)$, $\hat{y}=\psi(y)$.
Using the chain rule for the derivatives of function compositions, 
we obtain the corresponding transformations of the 
operators $ X^m $ and $ Y^n $ (we show only the leading terms): 
$$ X^m = \left(\varphi^{(1 )} \right)^m \hat{X}^m + \ldots, Y^n = \left(\psi^{(1)} \right)^n \hat{Y^n} + \ldots, 
$$ 
where 
$$X=\frac{\partial}{\partial x}, \quad \hat{X}=\frac{\partial}{\partial
\hat{x}}, \quad Y=\frac{\partial}{\partial
y}, \quad \hat{Y}=\frac{\partial}{\partial
\hat{y}}, \quad \varphi^{(1)}=\frac{\partial\varphi}{\partial
x}, \quad \psi^{(1)}=\frac{\partial\psi}{\partial y}.
$$ 
Let the original operator $ L $ be already $ H $-reduced. 
After the change of $x,y$, its leading coefficients, respectively, will be equal to 
$$\hat p_n(\hat{x})=p_n\left(\varphi^{-1}(\hat{x})\right) \cdot 
\left(\varphi^{(1)}\right)^n, \qquad
\hat q_m(\hat{y}) = q_m\left(\psi^{-1}(\hat{y})\right) \cdot \left(\psi^{(1)}\right)^m.
$$
The operator $ H = -XY + u (x, y) $ under the change of the variables $ x $, $ y $ becomes 
$\hat{H}=\varphi^{(1)}\psi^{(1)}\hat{X}\hat{Y}+
u\left(\varphi^{-1}(\hat{x}),\psi^{-1}(\hat{y})\right)$.


It is easy to show (cf.\ Theorem~\ref{th-step1} below) that
the leading coefficients $ p_n $, $ q_m $ of the operator $ L $ of the form (\ref{upros_L1_L2}) 
weakly commuting with $ H $ must depend only on one corresponding variable: $ p_n = p_n (x) $, $ q_m = q_m (y) $. 
Summarizing, we can state that for any operator $ L $ 
weakly commutative with $ H $, or for a weakly commutative triple (\ref{Kri1}), 
we can consider these operators to be $ H $-reduced (\ref{upros_L1_L2}), and changing the variables $ x $ , $ y $, we can bring the leading coefficients $ p_n $, $ q_m $ of 
one of the $ H $-commuting operators  to 1 (or, for even orders, if we do not want to resort to formally 
complex $ x $, $ y $, in some cases, to $-1$). 
Then, as it is easy to verify (see the beginning of the proof of Theorem~\ref{th-step1} below), for a weakly commutative triple the condition $ [L_1, L_2] = 0 \ ({\rm mod}\ H) $ implies that the leading coefficients of the second 
operator must be constant. 
In this case, of course, the operator $ H $ takes the form $ \tilde H = - \alpha (x) \beta (y) XY + \tilde u $. 
Dividing it by $ \alpha (x) \beta (y) $, we again obtain the form $ \hat H = -XY + \hat u $, keeping, by the Lemma~\ref{lem-step0}, the property of weak 
commutativity (\ref{Kri1}). 

We give below several examples of operators $ L $ commuting with $ H $ modulo $ H $. 

\medskip 

{\bf Example 1.} Let $ {\rm ord}\ L = (2,0) $, $ p_2 = 1 $, $ p_1 = 0 $, 
then the conditions of weak commutativity with $ H $ imply that $ p_{00} = \varphi (y) $ is an arbitrary function of $ y $; i.e.\ $ L = X^2 + \varphi (y) $. %
The remaining conditions of weak commutativity are 
$$\left \{
  \begin{array}{l}
  u_{xx}=0,\\
  u_y=-\frac{1}{2}\varphi_y. 
  \end {array}
  \right.
$$ 
This implies the following
form of the potential $ u $: 
$$ u (x, y) = - \frac{1}{2} \varphi'(y) x + \psi (y), 
$$ 
where $ \psi (y) $ is an arbitrary function of $ y $. 

\medskip 

{\bf Example 2.} Let  $ L $ of order $(2,0)$ have the 
general form $ L = x^2 + p_1 (x, y) x + p_{00} (x , y) $ with nonzero $ p_1 $. 
From the condition of weak commutativity we obtain a system of differential equations for 
$ p_1 $, $ p_{00} $, $ u $, from which it follows that 
$p_1=p_1(x)$, $p_{00}=p_{00}(y)$, $(p_{00})_y = -2u_x$, 
$u_{xx}=0$, $(p_1u)_x = 0$.
Finally, we get 
$p_1=\frac{c_2}{x-2c_1}$, $p_{00}=\varphi(y)$, $u=-\frac{1}{2}\varphi'(y)X+c_1\varphi'(y)$,
where $ c_1 $, $ c_2 $ are arbitrary constants and
$ \varphi (y) $ is an arbitrary function of $ y $. 

This example, despite its simplicity, gives one very useful result. Note that we have actually found {\it two} linearly independent operators that weakly commute with a fixed $H=-XY -\frac{1}{2}\varphi'(y)X+c_1\varphi'(y)$:
$$L_1=X^2+\varphi(y), \qquad L_2=\frac{1}{x-2c_1}X.
$$ 
If we calculate their commutator and reduce modulo $ H $, then we get a nonzero result. 

Thus unlike the situation in Theorem~\ref{thSch-1}, 
{\it if two operators commute modulo $ H $ with $ H $, 
then they do not necessarily commute with each other modulo $ H $.} 

One additional important observation is the fact that 
the condition $ p_1 \not \equiv 0 $, on one hand, 
implies the existence of {\it two} independent operators
that are weakly commutative with $H$
(but they do not  $ H $-commute with each other), 
on the other hand, this is {\it much more restrictive in the form of the potential} (compare Examples~1 and 2).


\medskip

We will devote a separate subsection to operators $ L $ of order $(2, 2)$. 

\subsection{Operators of order $(2,2)$ and addition theorems}

This case, on one hand, is also quite simple, 
on the other hand, it deserves detailed consideration, since it leads to one of the typical phenomena of algebro-geometric techniques in the 
theory of integrable nonlinear differential systems: 
non-trivial 
{\it addition theorems for special functions}.

We can again assume that the leading coefficients $ p_2 $, $ q_2 $ of the operators 
$ L_{(1)} (X) $ and $ L_{(2)} (Y) $ in the representation (\ref{upros_L1_L2}) are equal to 1. 
The case $ p_2 = 1 $, $ q_2 = -1 $ will be discussed below. 
The conditions for weak commutativity for an operator $ L $ of order $(2,2)$ with $ H $ can be easily found: 
\begin{enumerate}
  \item $(p_1)_y = (q_1)_x = 0$,
  \item $(p_{00})_y = -2u_x$,
  \item $(p_{00})_x = -2u_y$,
  \item $(p_{00})_{xy} + (p_1u)_x + (q_1u)_y + u_{xx} + u_{yy} = 0$.
\end{enumerate}
From the second and the third conditions we see that 
$ u_{xx} - u_{yy} = 0 $, 
i.e. 
\begin{equation}\label{vid_u}
  u = u_1\left(\frac{x+y}{2}\right) + u_2\left(\frac{x-y}{2}\right),
\end{equation}
which means separation of variables in the Schr\"odinger 
operator $ H = -XY + u $ in the coordinates 
$ \hat x = (x + y) / 2 $, $ \hat y = (x-y) /2$. 
This condition together with the other two 
leads to a special form of the potential $ u $. 
Namely, taking into account $ u_{xx} - u_{yy} = 0 $, we obtain from the four conditions above 
that it is necessary and sufficient that the equality $ (p_1u)_x + (q_1u)_y = 0 $ be satisfied. 
Introducing the function $ \psi (x, y) $ such that 
\begin{equation}\label{arr_u_psi} 
\psi_x = q_1u, \qquad \psi_y = - p_1u, 
\end{equation} 
we get the equivalent condition 
\begin{equation}\label{ur_psi} 
p_1 (x) \psi_x + q_1 (x) \psi_y = 0. 
\end{equation} 
Consider three possible cases: 
\begin{enumerate}
  \item $p_1 \equiv q_1 \equiv 0$;
  \item $p_1 \not\equiv 0 $, $ q_1 \equiv 0$ 
      (the case $p_1 \equiv 0 $, $ q_1 \not\equiv 0$
      is completely analogous);
  \item $p_1 \not\equiv 0 $, $ q_1 \not\equiv 0$.
\end{enumerate} 

{\bf The first case} is trivial: 
$ L = X^2 + Y^2 - 2 \left(u_1 \left(\frac{x + y} {2} \right) - u_2 \left(\frac{xy} {2} \right) \right) $ 
and in the rotated variables $ \hat x = (x + y) / 2 $, $ \hat y = (x-y) / 2 $ coincides up to the signs with $ \hat X $ and $ \hat Y $-parts of $ H $. 

\medskip 

{\bf The second case} gives a partial degeneration: 
$ (p_1u)_x = 0 $, that is, 
\begin{equation}\label{teo-slozh-1}
  u(x,y) =  u_1\left(\frac{x+y}{2}\right) + u_2\left(\frac{x-y}{2}\right) = \alpha(y)/p_1(x).
\end{equation}
It is easy to see that this identity is the simplest form of an {\it addition theorem}. Further consideration of this case will be postponed until the third case is completed. 

\medskip 

{\bf The third case}. The condition (\ref{ur_psi}) allows us to find the form of $ \psi $: 
\begin{equation}\label{vid_psi}
  \psi(x,y) = \phi(\alpha(x) - \beta(y))
\end{equation} 
for some functions $ \phi $, $ \alpha $, $ \beta $ of one variable, while $ p_1 (x) = 1 / \alpha'(x) $, $ q_1 (y) = 1 / \beta'(y) $. 

The conditions (\ref{arr_u_psi}) connect the functions 
$\psi$ and $u$, which have simple forms (\ref{vid_psi}), 
(\ref{vid_u}). 
Let us show a way to get an addition theorem from this fact.
First of all we see from (\ref{arr_u_psi}) that 
\begin{equation}\label{u_psi-shtrih}
  u = u_1((x+y)/{2}) + u_2((x-y)/{2}) = \alpha'(x)\beta'(y)\phi'(\alpha(x) - \beta(y)). 
\end{equation}
We integrate this equation w.r.t.\ $ x $ and $ y $, then 
introduce the primitives $ U_1 $, $ U_2 $, $ \Phi $ 
such that 
$U_1''(z) = u_1(z)$, 
$U_2''(z) = u_2(z)$, $\Phi'(z) = \phi(z)$.  
Then the final necessary and sufficient condition of 
weak commutativity of the operator $L$ in the third case consists in the identity 
\begin{equation}\label{teor-slozh-poln}
  \Phi(\alpha(x)-\beta(y)) = U_1\left(\frac{x+y}{2}\right) - U_2\left(\frac{x-y}{2}\right)+
  U_3(x)+U_4(y)
\end{equation}
for the specified 7 functions of one variable. 

{\it The problem of finding all possible non-trivial identities of the form (\ref{teor-slozh-poln}) is of obvious interest.} 

A number of examples of such sort 
is given by the addition theorems from different branches
of mathematics, starting with trivial algebraic ones: 
$$x^2-y^2=(x+y)(x-y),$$ 
$$\cos x+\cos y=2\cos\frac{x+y}{2}\cos\frac{x-y}{2},$$
$$\sinh x+\sinh y=2\sinh\frac{x+y}{2}\cosh\frac{x-y}{2},$$ 
and other similar trigonometric and exponential ones. 
After applying the logarithm they are reduced to the form (\ref{teor-slozh-poln}): 
$$\log(x^2-y^2)=\log\frac{x+y}{2}+\log\frac{x-y}{2}+\log 4,
$$
$$\log(\cos x+\cos y)=\log\cos\frac{x+y}{2}+\log\cos\frac{x-y}{2}+\log 2.
$$ 

In the latter case, we obtain the following values of the 
coefficients of the operator $ L = X^2 + p_1X + Y^2 + q_1Y + p_{00} $ and the potential $ u $: 

{\bf Example 3.} 
$$p_1=\frac{c_1}{\sin x}, \qquad q_1=-\frac{c_1}{\sin y}, \qquad p_{00}=-2\left (\frac{1}{\cos^2
(\frac{x+y}{2})} +\frac{1}{\cos^2(\frac{x-y}{2})} \right ),
$$
$$u=\frac{1}{\cos^2(\frac{x+y}{2})}-\frac{1}{\cos^2(\frac{x-y}{2})}.
$$

A rich stock of such identities is given by the theory of special functions.
For example the doubly periodic Jacobi function 
$ {\rm sn} \, z $ 
satisfy the following identity (cf. \cite{WWcma}): 
\begin{equation}\label{slozh-sn}
  {\rm sn}(x+y)\,{\rm sn}(x-y)=\frac{{\rm sn}^2x-{\rm sn}^2y}{1-k^2{\rm sn}^2x\,{\rm sn}^2y}.
\end{equation} 
Multiply both its parts by $ k $ and  
define the functions $ \alpha (x) $, $ \beta (y) $ by the 
relations 
$k\,{\rm sn}^2x={\rm tanh}\,\alpha(x)$, $k\,{\rm sn}^2y={\rm tanh}\,\beta(y)$.
Then the right hand side of (\ref{slozh-sn}) may be represented as the formula for the hyperbolic tangent of the 
difference of the arguments: 
$$\frac{{\rm tanh}\,\alpha(x)-{\rm tanh}\,\beta(y)}{1-{\rm tanh}\,\alpha(x){\rm tanh}\,\beta(y)}=
{\rm tanh}(\alpha(x)-\beta(y)),
$$ 
i.e.\ we obtain 
$$k\,{\rm sn}(x+y)\,{\rm sn}(x-y)={\rm tanh}(\alpha(x)-\beta(y)).
$$  
Again applying the logarithm we have: 
$$\log {\rm tanh}(\alpha(x)-\beta(y))=\log
{\rm sn}(x+y)+\log {\rm sn}(x-y)+ \log k,
$$ 
where $\alpha(z)=\beta(z)={\rm arcth}(k\,{\rm sn}^2(z)).$ 
that is the necessary identity of the form
(\ref{teor-slozh-poln}).

In the same way one may reduce the identities 
$$\wp(u) - \wp(v) = - \frac{\sigma(u+v)\sigma(u-v)}{\sigma^2(u)\sigma^2(v)},
$$
$$\wp(u+v) - \wp(u-v) = - \frac{\wp'(u)\wp'(v)}{(\wp(u) - \wp(v))^2}.
$$ 
to the form (\ref{teor-slozh-poln}) 
or (\ref{u_psi-shtrih}).

We also give here a more complicated relation for 
the Jacobi
$ \theta $-functions (see \cite{WWcma}), which can also be reduced to the form (\ref{teor-slozh-poln}): 
$$\theta_3(z+y)\theta_3(z-y)\theta^2_3=\theta^2_3(y)\theta^2_3(z)+
\theta^2_1(z)\theta^2_1(y).$$
We rewrite it in the form 
$$\frac{\theta^2_1(z)}{\theta^2_3(z)}+
\frac{\theta^2_3(y)}{\theta^2_1(y)}=\frac{\theta_3(z+y)\theta_3(z-y)\theta^2_3}
{\theta_3^2(z)\theta_1^2(y)}
$$
and denoting 
$$ \frac{\theta^2_1 (z)} {\theta^2_3 (z)} = \alpha (z), \qquad \frac{\theta^2_3 (y)} {\theta^2_1 (y)} =
 - \beta (y), 
$$ 
we apply the logarithm to the both parts: 
$$\log(\alpha(z)-\beta(y))=
\log\theta_3(z+y)+\log\theta_3(z-y)-\log\theta_3^2(z)-\log(\theta_1^2(y)/\theta_3^2).
$$  
This again gives (\ref{teor-slozh-poln}). 

A large stock of addition theorems of the form 
(\ref{teor-slozh-poln}) (sometimes called addition pseudo-theorems because of the presence of not only $ x + y $, but also the difference $ x-y $) 
can be found in \cite{WWcma} and the original 
papers by K.~Jacobi. 

Returning to the discussion of the equality 
(\ref{teo-slozh-1}) for the second case, 
we can state that there is a large set of non-trivial examples of similar identities in which both elementary and special functions can enter. 
Each identity of the form (\ref{teo-slozh-1}), 
(\ref{u_psi-shtrih}) or (\ref{teor-slozh-poln}) 
gives an example of {\it two} operators $ L $ weakly commutative with $ H $ (but, as a rule, not $ H $-commuting among themselves). 
Some of these examples are trivialized after transformation to the variables $ \hat x = (x + y) / 2 $, $ \hat y = (x-y) / 2 $. 
However, it should be noted that, in spite of the fact that the Schr\"odinger operator admits 
separation of variables in all such cases, some of the cases considered above are of some 
interest for the theory of weakly commutative triples. Example~3 above contains already 
{\it two} operators weakly commutative with $ H $, one of order 
$(2,2)$ (in which we can put $ c_1 = 0 $) and one of 
order $(1,1)$, the existence of the latter already imposes very strict restriction (\ref{ur_psi}), 
equivalent to the addition theorem. 
Moreover, as in Example~2, these two operators will not be 
mutually $ H $-commuting. 

Let us briefly consider the case $ p_2 = 1 $, $ q_2 = -1 $. As is easy to verify, the conditions of weak
commutativity give 
$ u_{xx} + u_{yy} = 0 $, i.e. the potential $ u $ must be a harmonic function. 
The remaining conditions give 
$p_1=p_1(x)$, $ q_1=q_1(y)$, 
$(p_{00})_y = -2u_x$,
$(p_{00})_x = 2u_y$,
$(p_{00})_{xy} + (p_1u)_x + (q_1u)_y + u_{xx} - u_{yy} = 0$.  

Again we get the equality $ (p_1u)_x + (q_1u)_y = 0 $ 
and after the introduction of $ \psi (x, y) $ such that (\ref{arr_u_psi}) holds and the corresponding second antiderivatives are introduced, we get
(for the maximally non-degenerate case, $ p_1 \not \equiv 0 $, $ q_1 \not \equiv 0 $) 
\begin{equation}\label{teor-slozh-garmon}
  U(x,y) = \Phi(\alpha(x)-\beta(y)) +  U_3(x)+U_4(y)
\end{equation}
for a harmonic function $ U (x, y) $. Such a form of {\it harmonic addition theorem} certainly deserves a separate study. 
As above, from each non-trivial identity of the form 
(\ref{teor-slozh-garmon}) 
one can construct an example of a pair of operators weakly commuting with $ H = -XY + u (x, y) $, 
without any obvious separation of variables (in the real case). 


\section{Basic theorems}\label{sec-our-th} 

First of all, we note that when we consider the properties of the triples of $ H $-commuting operators 
(\ref{Kri1}), using the skew field $ \mathcal{F}_2 (\partial_x) = \mathcal{F}_2 (X) $ and the ring $ \mathcal{F}_2 (\partial_x) [\partial_y] = \mathcal{F}_2 (X) [Y] $ leads to much simpler formulations. 

Denote as $ M $ and $ \widetilde {M} $ the following formal operators from $ \mathcal{F}_2 (X) [Y] $ and 
$ \mathcal {F}_2 (Y) [X] $, respectively: 
\begin{equation}\label{MMM}
  M = - X^{-1}\cdot H = Y -  X^{-1}\cdot u,
  \qquad \widetilde M = - Y^{-1}\cdot \widetilde{H} =
  X - Y^{-1}\cdot u,
\end{equation} 
where, as we mentioned in the introduction, we use the hyperbolic form of the operator $ H = - \partial_x \partial_y + u = - XY + u $. 

Suppose that $ H $ and the operators $ L_1 $ and $ L_2 $ satisfy (\ref{Kri1}). 
Dividing with remainder the operators $ L_1 $ and $ L_2 $ by the operator $ M $ (or $ \widetilde {M} $) in 
the ring $ \mathcal{F}_2 (X) [Y] $ 
(resp. $ \mathcal{F}_2 (Y) [X] $), we obtain formal linear \emph{ordinary} pseudodifferential operators 
$ R_1 $, $ R_2 $ which belong to 
the Ore skew field $ \mathcal{F}_2 (X) $ such that 
\begin{equation}\label{L2R} 
  L_1 - Q_1\cdot M=R_1 \in \mathcal{F}_2(X), 
  \quad L_2 -
  Q_2\cdot M=R_2 \in \mathcal{F}_2(X)
\end{equation} 
with $ Q_i \in \mathcal{F}_2(X)[Y]$. 

The following simple theorem was proved in 
\cite{TT10}: 
\begin{theorem}\label{th2} 
For the operators $ R_1 $, $ R_2 $ and $ M $, the following relations hold:
\begin{enumerate} 
\item $[R_1,R_2] = 0 $ in $ \mathcal{F}_2 (X) $; 
\item $[R_1,M]=[R_2, M]=0$ in $\mathcal{F}_2 (X)[Y]$; 
\item if there exists a polynomial $ Q (L_1, L_2) $ with constant coefficients, such that 
$ Q (L_1, L_2) = 0\ (mod\ H)$ in 
$ \mathcal{F}_2[X,Y]$, 
then $ Q (R_1, R_2) = 0 $ in $ \mathcal{F}_2 (X) $. 
\end{enumerate} 
\end{theorem}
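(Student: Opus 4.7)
The plan is to work entirely inside the skew polynomial ring $\mathcal{F}_2(X)[Y]$ and to exploit the identity $H=-XM$: any relation ``$=P\cdot H$'' in $\mathcal{F}_2[X,Y]$ becomes a relation ``$=\widetilde P\cdot M$'' in $\mathcal{F}_2(X)[Y]$. The decisive structural fact is that $M=Y-X^{-1}u$ is monic of degree $1$ in $Y$, so right division by $M$ produces a \emph{unique} decomposition $A=T\cdot M+U$ with $T\in\mathcal{F}_2(X)[Y]$ and $U\in\mathcal{F}_2(X)$; this is precisely the decomposition (\ref{L2R}) that defines $R_1,R_2$. I would prove claim (2) first, after which (1) and (3) follow formally.

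For (2), the relation $[L_i,H]=P_iH$ together with $H=-XM$ and the invertibility of $X$ in $\mathcal{F}_2(X)$ gives, after a short manipulation,
\[
[L_i,M]=\widetilde P_i\cdot M,\qquad \widetilde P_i=L_i-X^{-1}L_iX+X^{-1}P_iX\in\mathcal{F}_2(X)[Y].
\]
Substituting $L_i=Q_iM+R_i$ and using $[Q_iM,M]=[Q_i,M]\cdot M$ yields $[R_i,M]=\bigl(\widetilde P_i-[Q_i,M]\bigr)\cdot M$, a left multiple of $M$. On the other hand, direct computation with $M=Y-X^{-1}u$ gives $[R_i,M]=-\partial_y R_i+[X^{-1}u,R_i]\in\mathcal{F}_2(X)$, which has $Y$-degree $0$; uniqueness of the division by $M$ then forces $[R_i,M]=0$. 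For (1) one expands
\[
[L_1,L_2]=[Q_1M,Q_2M]+[Q_1M,R_2]+[R_1,Q_2M]+[R_1,R_2];
\]
the first term is visibly a left multiple of $M$, and the two mixed commutators become so after applying $MR_i=R_iM$ to rewrite e.g.\ $Q_1MR_2=Q_1R_2M$. Since $[L_1,L_2]=P_0H=-P_0X\cdot M$ is itself a left multiple of $M$, the remaining $[R_1,R_2]\in\mathcal{F}_2(X)$ must lie in $\mathcal{F}_2(X)[Y]\cdot M$, and the same $Y$-degree argument forces $[R_1,R_2]=0$.

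For (3) I would show by induction on word length that every non-commutative monomial in $L_1,L_2$ has the normal form $L_{i_1}\cdots L_{i_n}=T\cdot M+R_{i_1}\cdots R_{i_n}$ with $T\in\mathcal{F}_2(X)[Y]$: the inductive step multiplies on the left by $L_{i_0}=R_{i_0}+Q_{i_0}M$ and uses $MR_j=R_jM$ from (2) to transport the $M$-factor to the right of the accumulated $R$-tail. Since the $R$'s commute by (1), the tail collapses to $R_1^{a}R_2^{b}$, and summing with the scalar coefficients of $Q$ yields $Q(L_1,L_2)=T\cdot M+Q(R_1,R_2)$. The hypothesis $Q(L_1,L_2)=S\cdot H=-SX\cdot M$ is a left multiple of $M$, so uniqueness of the division forces $Q(R_1,R_2)=0$. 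The main bookkeeping hurdle throughout is that $X$ does not commute with the function coefficients of $\mathcal{F}_2(X)[Y]$, so one must verify carefully that the translation between left multiples of $H$ and left multiples of $M$ preserves the left-multiple form; once (2) is in place, however, (1) and (3) are routine commutator algebra plus the $Y$-degree uniqueness.
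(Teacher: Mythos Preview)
Your argument is correct. Note, however, that the paper itself does not give a proof of this theorem: it simply records it as ``proved in \cite{TT10}'' and then uses it as input for the later results, so there is no in-paper proof to compare against. The route you take --- pass from left multiples of $H$ in $\mathcal{F}_2[X,Y]$ to left multiples of $M$ in $\mathcal{F}_2(X)[Y]$ via $H=-XM$, then exploit that $M$ is monic of $Y$-degree~$1$ so that any element of $\mathcal{F}_2(X)$ lying in $\mathcal{F}_2(X)[Y]\cdot M$ must vanish --- is exactly the natural argument in the framework the paper has set up (the ring $\mathcal{F}_2(X)[Y]$ and the reduction~(\ref{L2R})), and is almost certainly what \cite{TT10} does. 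The bookkeeping you worry about is harmless: conjugation by $X$ keeps $\widetilde P_i$ inside $\mathcal{F}_2(X)[Y]$, and all the cofactors $\widetilde P_i-[Q_i,M]$, $-P_0X-\cdots$, $-SX-T''$ arising in your argument genuinely lie in $\mathcal{F}_2(X)[Y]$, so the $Y$-degree comparison is legitimate. One minor remark on~(3): the paper (see the proof of Theorem~\ref{thTs-BCh}) interprets $Q(L_1,L_2)$ with the ordering convention $L_1^rL_2^s$; your induction over arbitrary words in $L_1,L_2$ covers this and more, and since $[R_1,R_2]=0$ the product $R_{i_1}\cdots R_{i_n}$ collapses unambiguously to $R_1^aR_2^b$.
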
 
So obviously the theory of commuting elements of the Ore skew field $ \mathcal{F}_2 (X) $  
is of great importance for the theory of integrable 
nonlinear partial differential equations. It resembles in many aspects the Burchnall-Chaundy theory of commuting ordinary differential operators, although 
not all of the classical results of Burchnall and Chaundy have been carried over 
to the case under consideration. We refer again to the 
review \cite{Goodearl} and recent work by other authors \cite{Richter-Silv}. 

First we prove several results for the ring 
$ \mathcal{F}_2 ((X)) [Y] $. 
Consider the operator 
$$
M=Y-X^{-1}u=Y+\sum^\infty_{i-1}(-1)^iu^{i-1,0}X^{-i},
$$ 
(here and below we use the notation
 $ a^{i, j} \equiv \frac{\partial^{i + j} a} {\partial x^i \partial y^j} $ 
for the partial derivatives of functions 
$ a (x, y) \in \mathcal{F}_2 $) and two elements of the skew field $ \mathcal{F}_2 ((X)) $ of the form 
$$ R_1 = \sum^\infty_{i = 0} r_{1-i} X^{1-i} 
$$ 
(thus $ R_1 $ is of order 1) and 
$$\widetilde{R}=\sum^\infty_{i=0}p_{k-i}X^{k-i}
$$ 
of arbitrary order $ k $. 
\begin{theorem}\label{th-step1} 
If the operators $ M $, $ R_1 $, $ \widetilde {R} $ pairwise commute, 
that is, $ [M, R_1] = [M, \widetilde {R}] = [R_1, \widetilde {R}] = 0 $, 
then there exist constants $ c_0 $, $ c_1 $, $ c_2, \ldots, $ such that 
\begin{equation}\label{eq-ryad-ci} 
\widetilde {R} = \sum^\infty_{i = 0} c_iR_1^{k-i}. 
\end{equation} 
\end{theorem}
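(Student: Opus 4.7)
My plan is to adapt the classical Schur-style construction used in the proof of Theorem~\ref{thSch-3} to the present mixed setting, where we also have a commutation with the first-order element $M=Y-X^{-1}u$ of $\mathcal{F}_2((X))[Y]$. The idea is to determine the constants $c_0,c_1,c_2,\dots$ inductively by matching the leading coefficient of the remainder $\widetilde R-\sum_{j<i}c_jR_1^{k-j}$ against a suitable power of $R_1$, dropping the order step by step. Since every step lowers the order by at least one, the resulting series converges in the $X$-adic topology of $\mathcal{F}_2((X))$ and its sum is $\widetilde R$.

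The heart of the argument is the base step. First I would analyse $[M,R_1]=0$. Because $X^{-1}u$ has order $-1$, the commutator $[X^{-1}u,R_1]$ has order at most $0$, while $[Y,R_1]=\sum_{j\ge 0}(r_{1-j})_y X^{1-j}$ has top term $(r_1)_y X$; consequently the order-$1$ coefficient of $[M,R_1]$ is $(r_1)_y$, which forces $r_1=r_1(x)$. The same reasoning applied to $\widetilde R=p_kX^k+\ldots$ yields $p_k=p_k(x)$. Next I would expand the leading terms of $[R_1,\widetilde R]$, getting
\begin{equation*}
  [R_1,\widetilde R]=\bigl(r_1\,(p_k)_x-k\,p_k\,(r_1)_x\bigr)X^k+(\text{lower order terms}),
\end{equation*}
so that vanishing at order $k$ is equivalent to $(p_k/r_1^k)_x=0$; since we already have $y$-independence, this means $p_k=c_0 r_1^k$ with $c_0$ a genuine constant of $F$. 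Because the leading coefficient of $R_1^k$ is $r_1^k$, the operator $\widetilde R^{(1)}:=\widetilde R-c_0 R_1^k$ has order at most $k-1$ and still commutes with both $M$ and $R_1$.

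The induction step is then an immediate repetition of the base step applied to $\widetilde R^{(1)}$: one extracts a constant $c_1$ (possibly zero, if the order of $\widetilde R^{(1)}$ has already dropped below $k-1$) so that $\widetilde R^{(2)}=\widetilde R^{(1)}-c_1 R_1^{k-1}$ has order at most $k-2$, and so on. The invariant ${\rm ord}\bigl(\widetilde R-\sum_{j\le i}c_jR_1^{k-j}\bigr)\le k-i-1$ then gives both the formal convergence of the series in $\mathcal{F}_2((X))$ and the required identity~(\ref{eq-ryad-ci}).

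The principal obstacle, in my view, is the combined use of both commutations in pinning down each $c_i$: $[M,\cdot]=0$ controls only the $y$-dependence of the top coefficient, while $[R_1,\cdot]=0$ only fixes the ratio to $r_1^{k-i}$ up to a function of $y$; the two together are exactly what one needs. A related subtlety is that one cannot simply invoke Schur's Theorem~\ref{thSch-3}: the commutation with $M$ forces only the \emph{highest} coefficients of $R_1$ and $\widetilde R$ to be $y$-independent, while lower coefficients do depend on $y$ (e.g.\ $[M,R_1]=0$ at order $-1$ reads $(r_{-1})_y+(u r_1)_x=0$), so the argument really has to be carried out coefficient by coefficient in $\mathcal{F}_2((X))$ rather than being reduced to the one-variable setting $\mathcal{F}_1((X))$.
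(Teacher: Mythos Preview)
Your proposal is correct and follows essentially the same route as the paper: both arguments implement Schur's scheme by using $[M,\cdot]=0$ to force $y$-independence of the leading coefficient, then $[R_1,\cdot]=0$ to obtain the ODE $r_1(p_k)_x-k\,p_k(r_1)_x=0$ and hence $p_k=c_0r_1^k$, after which one subtracts $c_0R_1^k$ and iterates. The paper additionally records $r_0=r_0(x)$ and $p_{k-1}=p_{k-1}(x)$ at the first pass, but this is not essential since those facts are re-derived at the next step anyway; your remark that lower coefficients genuinely depend on $y$, so the proof cannot be reduced wholesale to $\mathcal{F}_1((X))$, is a useful clarification that the paper leaves implicit.
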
 
\begin{proof} 
(We follow here the scheme of I.Schur~\cite{Schur}). 
From 
$$ [M, R_1] = r_1^{0,1} X^1 + r_0^{0,1} + (r_{- 1}^{0,1} + r_1^{1,0} u + r_1u^{1,0}) X^{- 1} + \ldots = 0 
$$ 
one concludes that $ r_1^{0,1} = 0 $, 
$ r_0^{0,1} = 0, $ so 
$ r_1 $ and $ r_0 $ are functions of $ x $ only: 
$ r_1 = r_1 (x) $, $ r_0 = r_0 (x) $. 
Similarly, $ [M, \widetilde {R}] = 0 $ implies, that 
$ p_{k}^{0,1} = p_{k-1}^{0,1} = 0 $, that is, 
$ p_{k} (x) $ and $ p_{k-1} (x) $ also does not dependent on $ y $. 
From the commutativity of the operators $ R_1 $ and 
$ \widetilde {R} $, we have 
$ r_1p_{k}^{1,0} - kr_1^{1,0} p_{k} = 0 $, 
thus there exists a constant $ c_0 $ such 
that $ p_{k} = c_0r_1^{k} $, 
since $ r_1 $ and $ p_{k} $ depend only on $ x $. 

Hence, the operator $ \widetilde {R} $ of order $ k $ is representable as 
$$\widetilde{R}=c_0r_1^{k}X^{k}+p_{k-1}X^{k-1}+\ldots.
$$
We subtract from $ \widetilde {R} $ the operator $ c_0R_1^{k} $ obtaining the new operator 
$ \widetilde {\widetilde {R}} = \widetilde {R} -c_0R_1^{k} $ of order $ k-1 $. 
Obviously, the operator 
$ \widetilde {\widetilde {R}} $ also commutes with $ R_1 $ and $ M $. 

Proceeding further we find a constant $ c_1 $ such 
that $ \widetilde {\widetilde {R}} = c_1r_1^{-k-1} X^{-k-1} + \ldots. $ 
This results in 
$$ \widetilde {R} = c_0R_1^{-k} + c_1R_1^{-k-1} + \ldots $$ 
with a remainder of order $ k-2 $. 
Continuing this procedure, we obtain the required 
decomposition (\ref{eq-ryad-ci}).
\end{proof}


\begin{lemma}\label{lem-step11} 
If $ P \in \mathcal{F}_2 ((X)) $ of order $ n $ commutes with $ M $, 
that is, $ [P, M] = 0 $ in the ring 
$ \mathcal{F}_2 (X) [Y] $, 
then $ R = \sqrt [n] {P} $ 
(the root of degree $ n $ of the operator $ P $ ) 
also commutes with $ M $: $ [R, M] = 0 $. 
\end{lemma}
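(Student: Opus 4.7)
The plan is to mimic a standard centralizer argument: since $R^n=P$ by definition of the $n$-th root, I would derive $[R,M]=0$ from $[P,M]=0$ by examining leading terms in $\mathcal{F}_2((X))$.

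First I would observe that $A := [R,M]$ actually lies in $\mathcal{F}_2((X))$, i.e.\ involves no $Y$. Indeed, writing $M=Y-X^{-1}u$, one has $[R,M]=[R,Y]-[R,X^{-1}u]$; the first commutator equals $-\sum r_{1-i}^{0,1}X^{1-i}$ (since $Y$ acts as $\partial_y$ on coefficients), while the second is computed entirely inside $\mathcal{F}_2((X))$. So $A$ has a well-defined order in $\mathcal{F}_2((X))$.

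Next, I would exploit the standard Leibniz-type expansion of the commutator of a product:
\begin{equation*}
[P,M]=[R^n,M]=\sum_{i=0}^{n-1} R^i\,[R,M]\,R^{n-1-i}=\sum_{i=0}^{n-1} R^i\,A\,R^{n-1-i}.
\end{equation*}
Assume for contradiction $A\neq 0$ and let $k$ be its order, with leading coefficient $a_k\in\mathcal{F}_2$, $a_k\not\equiv 0$. Since $R$ has order $1$ with leading coefficient $r_1=\sqrt[n]{p_n}\in\mathcal{F}_2$, the product $R^iAR^{n-1-i}$ has order $n-1+k$ with leading coefficient $r_1^{\,i}\cdot a_k\cdot r_1^{\,n-1-i}=r_1^{\,n-1}a_k$ (the coefficients are ordinary functions, so they commute at the leading level; all lower-order corrections appear in strictly smaller powers of $X$). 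Summing over the $n$ identical leading terms yields
\begin{equation*}
[P,M]=n\,r_1^{\,n-1}a_k\,X^{n-1+k}+(\text{lower order terms}),
\end{equation*}
whose leading coefficient is nonzero (characteristic zero, $r_1\neq 0$, $a_k\neq 0$). This contradicts $[P,M]=0$, so $A=[R,M]=0$.

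The only step I would expect to require some care is the verification that the leading coefficients of the summands $R^iAR^{n-1-i}$ all coincide and do not cancel; here it is essential that the leading coefficients $r_1$ and $a_k$ are ordinary functions (so commute as elements of $\mathcal{F}_2$), that we are in characteristic zero (so $n\neq 0$), and that $r_1\neq 0$ (guaranteed by Theorem~\ref{thSch-2}). Everything else is a routine manipulation of orders in $\mathcal{F}_2((X))[Y]$.
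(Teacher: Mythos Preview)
Your proof is correct and follows essentially the same route as the paper: both expand $[R^n,M]$ via the Leibniz-type sum $\sum R^{\,i}[R,M]R^{\,n-1-i}$ and reach a contradiction by computing the leading term $n\,r_1^{\,n-1}a_k\,X^{n-1+k}$. Your preliminary verification that $[R,M]\in\mathcal{F}_2((X))$ (i.e.\ has no $Y$-part) is a small clarifying addition that the paper leaves implicit.
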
 
\begin{proof} 
The proof of the existence of a root of degree $ n $ of any element of $ \mathcal{F}_2 ((X)) $ of order 
$ n $ was given by Schur (Theorem~\ref{thSch-2}). 
Consider the following chain of obvious equalities 
\begin{equation}\label{kommPM} 
0=[P,M]=[R^n,M]=\sum^n_{i-1}R^{n-i}[R,M]R^{i-1}.
\end{equation} 
Then, proving by contradiction, suppose that $ [R, M] \neq 0 $, where 
$$R=\sqrt[n]{P}=r_1X^1+r_0X^0+r_{-1}X^{-1}+\ldots,
$$
$M=Y-uX^{-1}+u^{1,0}X^{-2}-u^{2,0}X^{-3}+\ldots.$ 
Let the leading term of the commutator $ [R, M] $ 
equal 
$ \phi_s (x, y) X^s $, then the leading term in the sum (\ref{kommPM}) is $nr_1^{n-1}\phi_s X^{n-1+s}$, 
i.e. the sum is not equal to zero, which contradicts to (\ref{kommPM}). 
\end{proof}

\begin{theorem}\label{thTs-BCh} 
  For two
operators $ L_1 $, $ L_2, \in \mathcal{F}_2 [X, Y] $
and $ H = - \partial_x \partial_y + u $ 
satisfying (\ref{Kri1}), 
there exists a polynomial $ Q (L_1, L_2) $ 
with constant coefficients such that 
$ Q (L_1, L_2) = S \cdot H $, 
$ S \in \mathcal{F}_2 [X, Y] $. 
\end{theorem}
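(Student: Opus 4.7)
The plan is to reduce the claim, via Theorem~\ref{th2}, to producing an algebraic relation with constant coefficients between the commuting elements $R_1,R_2\in\mathcal{F}_2(X)$, and then to lift that relation back to the polynomial ring $\mathcal{F}_2[X,Y]$.

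\medskip

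For the reduction, apply Theorem~\ref{th2} to write $L_i=R_i+Q_i\cdot M$ with $R_1,R_2\in\mathcal{F}_2(X)$ mutually commuting and both commuting with $M=Y-X^{-1}u$ in $\mathcal{F}_2(X)[Y]$. Suppose we can produce a nonzero polynomial $Q\in\mathbb{C}[x,y]$ with $Q(R_1,R_2)=0$ in $\mathcal{F}_2(X)$. Then the congruences $L_i\equiv R_i\pmod{M}$ combined with $[R_i,M]=0$ force $Q(L_1,L_2)\equiv Q(R_1,R_2)=0\pmod{M}$ in $\mathcal{F}_2(X)[Y]$, so $Q(L_1,L_2)=\tilde{S}\cdot M$ for some $\tilde{S}\in\mathcal{F}_2(X)[Y]$; the identity $XM=-H$ rewrites this as $Q(L_1,L_2)=-\tilde{S}X^{-1}\cdot H$ in the full Ore skew field $\mathcal{F}_2(X,Y)$.

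\medskip

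To produce the algebraic relation, I would first normalize the leading coefficients of $L_1,L_2$ to $1$ by the change of variables described just before Example~1; the paper's analysis implies that the corresponding leading coefficients of $R_i$ in $X$ then become elements of $\mathbb{C}$, and in fact $1$ after rescaling. Consequently every monomial $R_1^aR_2^b$ has leading coefficient $1\in\mathbb{C}$ and $X$-order $an+bm$, where $n=\mathrm{ord}(R_1)$ and $m=\mathrm{ord}(R_2)$ are positive. The machinery of Theorems~\ref{thSch-2}, \ref{th-step1} and Lemma~\ref{lem-step11} provides an extraction of $R=\sqrt[n]{R_1}$ commuting with both $M$ and $R_2$, giving the series representation $R_2=\sum_{i\geq0}c_iR^{m-i}$ with $c_i\in\mathbb{C}$ and confirming that all monomials $R_1^aR_2^b$ live in the commutative constant-coefficient $\mathbb{C}$-subalgebra generated by $R$. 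Now a Burchnall--Chaundy dimension count applies inside $A:=\mathbb{C}[R_1,R_2]\subset\mathcal{F}_2(X)$ filtered by $X$-order. The ``coefficient of $X^N$'' map $A\cap\{\mathrm{ord}_X\leq N\}\to\mathbb{C}$ has kernel $A\cap\{\mathrm{ord}_X\leq N-1\}$, so successive quotients are at most one-dimensional and $\dim_{\mathbb{C}}(A\cap\{\mathrm{ord}_X\leq N\})\leq N+\mathrm{const}$. On the other hand, the $\sim N^2/(2nm)$ monomials $R_1^aR_2^b$ with $an+bm\leq N$ would be $\mathbb{C}$-linearly independent if $R_1,R_2$ were algebraically independent; the quadratic count contradicts the linear bound for $N$ large, forcing the relation $Q(R_1,R_2)=0$.

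\medskip

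The principal obstacle is the lift-back step: verifying that $S:=-\tilde{S}X^{-1}$ actually lies in $\mathcal{F}_2[X,Y]$, since a priori it could harbor $X^{-k}$ denominators. The plan is to invoke the entire construction symmetrically with the roles of $X$ and $Y$ swapped, using $\widetilde{M}=X-Y^{-1}u\in\mathcal{F}_2(Y)[X]$ and the parallel reductions $\widetilde{R}_1,\widetilde{R}_2\in\mathcal{F}_2(Y)$; these satisfy the same Burchnall--Chaundy argument and yield the same polynomial relation $Q$ by uniqueness, producing $Q(L_1,L_2)=S_{\text{alt}}\cdot H$ with $S_{\text{alt}}\in\mathcal{F}_2(Y)[X]$. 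Uniqueness of right-division by $H$ in the ambient Ore skew field $\mathcal{F}_2(X,Y)$ forces $S=S_{\text{alt}}$, so $S\in\mathcal{F}_2(X)[Y]\cap\mathcal{F}_2(Y)[X]$. An element in this intersection has no denominators in $X$ (from the second factor) and no denominators in $Y$ (from the first), hence is a finite sum $\sum c_{ij}(x,y)X^iY^j$ with $i,j\geq0$ and $c_{ij}\in\mathcal{F}_2$, completing the proof.
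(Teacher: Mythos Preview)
Your approach inverts the logic of the paper: you try to prove $Q(R_1,R_2)=0$ in $\mathcal{F}_2(X)$ first and then lift back to $\mathcal{F}_2[X,Y]$, whereas the paper establishes $Q(L_1,L_2)\equiv 0\pmod H$ directly inside $\mathcal{F}_2[X,Y]$ and only afterwards deduces $Q(R_1,R_2)=0$ via Theorem~\ref{th2}(3). This inversion hits a genuine obstruction.

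The gap is in the dimension count. Your bound $\dim_{\mathbb C}\bigl(A\cap\{{\rm ord}_X\le N\}\bigr)\le N+\text{const}$ needs the filtration to be bounded below, i.e.\ $A_{N_0}=0$ (or at least finite-dimensional) for some $N_0$. But for pseudodifferential $R_1,R_2$ this can fail: e.g.\ if $n=m$ with both leading coefficients normalized to~$1$, then $R_2-R_1-c_1\in A$ already has order $\le -1$, and its powers produce infinitely many independent elements of $A_0$. More invariantly, via $R=\sqrt[n]{R_1}$ your algebra $A$ embeds in the ring of one-sided formal Laurent series $\sum_{j\le K}e_jR^j$ over~$\mathbb C$, and the two specific series $R^n$ and $\sum_{i\ge 0}c_iR^{m-i}$ are algebraically independent for generic constants $c_i$. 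The classical Burchnall--Chaundy count works because differential operators have order $\ge 0$; that floor is absent in $\mathcal{F}_2(X)$. This is precisely the open problem the paper isolates in its Conclusion (algebraic dependence of arbitrary commuting elements of $F(X)$): your argument would settle it if it went through.

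The paper sidesteps the issue by running the count on the $H$-reduced form $W=W_{(1)}(X)+W_{(2)}(Y)+w_{00}$ of $Q(L_1,L_2)$ \emph{inside $\mathcal{F}_2[X,Y]$}. There $W_{(1)},W_{(2)}$ are genuine differential operators with non-negative orders bounded by $N\max(n_i,m_i)$; since $W$ still $H$-commutes with $L_1,L_2,H$, its two leading coefficients are single constants (of the form~(\ref{starsh-koeff}), by the argument of Theorem~\ref{th-step1}), so each step imposes two linear conditions on the $b_{r,s}$ and lowers both orders by~$1$. The process terminates after $O(N)$ steps against $\sim N^2/2$ unknowns, and no lift-back is required.

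A secondary issue: even if $Q(R_1,R_2)=0$ were available, your $\widetilde M$-symmetry step appeals to ``uniqueness'' to get the \emph{same} $Q$ annihilating $\widetilde R_1,\widetilde R_2$. But the minimal relation among $R_1,R_2\in\mathcal{F}_2(X)$ and the minimal relation among $\widetilde R_1,\widetilde R_2\in\mathcal{F}_2(Y)$ are computed in different skew fields with no a~priori link, so nothing forces them to coincide.
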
 
\begin{proof} 
We assume that the operators $ L_1 $, $ L_2, \in \mathcal{F}_2 [X, Y] $ have been 
simplified to the form (\ref{upros_L1_L2}):
$$L_1 = L_{1{(1)}}(X) +  L_{1{(2)}}(Y) + p_{100}(x,y) 
= \sum^{n_1}_{k=1}p_{1k}(x,y)X^k+\sum^{n_2}_{k=1}q_{1k}(x,y)Y^k + p_{100}(x,y),
$$
$$L_2 = L_{2{(1)}}(X) +  L_{2{(2)}}(Y) + p_{200}(x,y) 
= \sum^{m_1}_{k=1}p_{2k}(x,y)X^k+\sum^{m_2}_{k=1}q_{2k}(x,y)Y^k + p_{200}(x,y).
$$
Then the transformation (\ref{L2R}) of the operators 
$ L_1 $, $ L_2 $ into the commuting pseudodifferential 
operators $ R_1, R_2 \in \mathcal{F}_2 (X) $ 
preserves their leading coefficients $ p_{1n_1} $
and $ p_{2m_1} $. 
As follows from the proof of  
Theorem~\ref{th-step1} 
(and Theorem~\ref{thSch-3}), 
these leading coefficients thus have the form 
$p_{1n_1}= c_1(\phi(x))^{n_1}$, 
$p_{2m_1} = c_2(\phi(x))^{m_1}$ 
for some function $ \phi (x) $ of one variable $ x $. Applying the same transformation (\ref{L2R}) to the 
commuting pseudodifferential operators $ \widetilde R_1, \widetilde R_2 \in \mathcal{F}_2 (Y) $ 
using the $ \widetilde M $ operator, we find that the leading coefficients $ q_{1n_2} $ and $ q_{2m_2} $ have the form $ d_1 (\psi (y))^{n_2} $, 
$ d_2 (\psi (y))^{m_2} $. 

Consider the monomials of a general polynomial with constant coefficients $ Q (L_1, L_2) = \sum_{r, s}^{r + s \leq N} b_{r, s} (L_1)^r (L_2)^s $. 
Since the operators $ L_1 $, $ L_2 $ $ H $-commute, 
we suppose their order inside each monomial to be fixed (first stands a power of $ L_1 $, then 
a power of $ L_2 $). 
$ H $-reducing each term 
$$ b_{r, s} (L_1)^r (L_2)^s = P_{r, s} H + W, 
$$
we obtain the following operator $ H $-commuting with $ L_1 $, $ L_2 $:
\begin{equation}\label{H-upr_w} 
W = W_{(1)} (X) + W_{(2)} (Y) + w_{00} (x, y), 
\end{equation} 
whose order is bounded above by 
$ N \max (n_i, m_i) $. The leading coefficients of 
$ W_{(1)} (X) $, $ W_{(2)} (Y) $ should have the form 
\begin{equation}\label{starsh-koeff} 
c_3 (\phi (x))^{n_3}, \qquad c_4 (\psi (y))^{m_2}, \qquad c_i = {\rm const} 
\end{equation} 
respectively (the other coefficients are significantly 
more complicated). 
The complete $ H $-reduced operator $ Q (L_1, L_2) $ 
also weakly commutes with $ L_1 $, $ L_2 $, $ H $, hence its leading coefficients must also have the form 
(\ref{starsh-koeff}) with constants $ c_i $, linear in $ b_{r, s} $. 
Thus, equating these two leading coefficients to zero, 
we obtain two linear equations on $ b_{r, s} $ of the form 
\begin{equation}\label{lin_c_cherez_b} 
\sum_{r + s \leq N} \lambda_{r, s } b_{r, s} = 0. 
\end{equation} 
Assuming that they hold, we see that the $ H $-reduced 
form of $ Q (L_1, L_2) $ in representation (\ref{H-upr_w}) has orders at least by 1 smaller than the initial one, under two linear conditions on the coefficients $ b_{r, s} $. 
Moreover, the leading coefficients of the resulting operator due to its $ H $-commutativity with $ L_1 $, 
$ L_2 $, $ H $ also have the form (\ref{starsh-koeff});
again equating them to zero, we obtain two more linear equations of the form (\ref{lin_c_cherez_b}). 
Obviously, for sufficiently large $ N $, the number of 
terms (approximately $ N^2/2 $) in $ Q (L_1, L_2) $ will be greater than the number of equations 
(\ref{lin_c_cherez_b})
(approximately $ 2N \max (n_i, m_i) $) which
guarantee vanishing of all the coefficients of 
the $ H $-reduced operator $ Q (L_1, L_2) $ in the form (\ref{H-upr_w}). 
The remaining $ w_{00} (x, y) $ must weakly commute with $ H $, so
is a constant. 
Thus, the existence of a nontrivial polynomial $ Q (L_1, L_2) $ with constant coefficients such that $ Q (L_1, L_2) = 0 \ ({\rm mod}\ H) $ is obvious. 
\end{proof}


\section{Conclusion}\label{zakl} 

The review \cite{Goodearl} contains many results of the theory
of Ore skew fields, their various algebraic 
generalizations (as well as for 
skew fields of formal series of the form 
(\ref{pdo1})), some generalizations of the Burchnall-Chaundy theorem on polynomial relations among commuting elements, commutativity and the finite-dimensionality of the centralizers of 
non-trivial elements (generalizations of Theorem~\ref{thSch-1} by I.Schur). 
In particular, \cite{Goodearl} refers to a theorem of Resco, Small and Wadsworth (Theorem~5.13) 
on the upper bound on the degree of transcendence of any commutative subfield of a (generalized) Ore skew field 
$ R (\delta) $, which, in particular, implies a 
theorem of Burchnall-Chaundy type for the case when 
the field of coefficients of the original ring of ordinary differential operators is a field of rational 
or algebraic functions of one variable. However, it 
seems that until now the general theorem on the algebraic dependence of any two commuting elements
(the analogue of Theorem~\ref{thTs-BCh}) for the general case of one-dimensional Ore skew field
$ F (X) $ with an arbitrary differential field of the coefficients $ F $ of characteristic 0 is unknown. 
Note that the commuting elements $ R_1 $, $ R_2 \in \mathcal{F}_2 (X) $ corresponding to the operators $ L_1 $, $ L_2 \in \mathcal{F}_2 [X, Y] $ in 
Theorem~\ref{thTs-BCh}, have a very special form. 
It is of great interest either to prove a similar general algebraic theorem for $ F (X) $, or to give a counterexample. 
As is obvious from Theorem~\ref{thSch-3}, 
it may be easy to prove only the existence of a
{\it functional} (possibly, transcendental) relation between 
commuting elements $ R_1 $, $ R_2 \in F (X) $, 
provided the Puiseux (\ref{eq-ryad-ci-Schur}) expansion converges. 



\begin{thebibliography} {99}%
\bibitem{BC}
  {J.L. Burchnall, T.W. Chaundy}, {\it Commutative ordinary
  differential operators}, Proc. London Soc., Ser. 2, {\bf 21} (1923),
  420--440.
  
\bibitem{DKN}
  {B.A. Dubrovin, I.M. Krichever and S.P. Novikov}, {\it The
  Schr\"o\-din\-ger equation in a periodic field and Riemann surfaces}, 
  Soviet Math. Dokl. {\bf 17} (1976), 947--952.
  
\bibitem{K97} I.M. Krichever, {\it Methods of algebraic geometry in the theory of non-linear equations},  
  Russian Math. Surveys {\bf 32:6} (1977), 185--213.
  
\bibitem{TT10} {I.A. Taimanov, S.P. Tsarev}, {\it The Moutard transformation: an algebraic formalism via pseudodifferential operators and applications}, 
  OCAMI (Osaka City University Advanced Mathematical Institute) Study Series, {\bf 3} (2010), 169--183.
  
\bibitem{Ore} {O. Ore}, {\it Linear equations in non-commutative fields}, Ann. of
  Math. {\bf 32} (1931), 463--477.
  
\bibitem{Schur} I. Schur, {\it \"Uber vertauschbare lineare Differentialausdr\"ucke},  Sitzungsber. Berliner
  Math. Gesellschaft, {\bf 4} (1904), 2--8.
  
\bibitem{Cohn} P.M. Cohn, {\it Skew Fields. Theory of General Division Rings}, 
  Encyclopedia of Mathematics and its Applications, {\bf v.57}, Cambridge Univ. Press, Cambridge, 1995.
  
\bibitem{Goodearl} K.R. Goodearl, {\it Centralizers in differential, pseudo-differential, 
  and fractional differential operator rings}, Rocky Mountain Journal of Mathematics. {\bf 13:4} (1983), 573--618.
  
\bibitem{Richter-Silv} J. Richter, S.D. Silvestrov, {\it Computing Burchnall-Chaundy Polynomials with Determinants}, In: Silvestrov S., Ran\'ci\'c M. (eds.), Engineering Mathematics II. Springer Proceedings in Mathematics \& Statistics, Springer, {\bf 179} (2016) 57--64.
  
\bibitem{WWcma} K.T. Whittaker, G.N. Watson, {\it A course of modern analysis},  4ed., CUP, 1950.
  
\end{thebibliography}
\end{document}